\documentclass[preprint,authoryear,11pt]{elsarticle}

\usepackage{amsmath,amsthm,amssymb}
\usepackage{rotating}
\usepackage{morefloats}
\usepackage{comment}
\usepackage{natbib}
\usepackage{caption}
\usepackage{booktabs}
\usepackage{multirow}
\usepackage{longtable}
\usepackage{setspace}
\usepackage[margin=2.5cm]{geometry}
\usepackage{bbm}
\usepackage{pdflscape}

\newtheorem{prp}{Proposition}

\newtheorem{crl}{Corollary}

\newcommand{\w}{\boldsymbol{w}}
\newcommand{\wt}{\boldsymbol{w}^{\!\top}}
\newcommand{\tw}{\boldsymbol{\tilde{w}}}

\begin{document}
\title{Entropic Value-at-Risk parity for tempered stable returns}
\author{Jaehyung Choi\corref{cor}}
\ead{jj.jaehyung.choi@gmail.com}
\cortext[cor]{Corresponding author}

\begin{abstract}
	We develop Entropic Value-at-Risk (EVaR) parity for tempered stable returns. EVaR-based inverse risk parity (IRP) and equal risk contribution (ERC) portfolios are constructed using multivariate normal tempered stable models and independent component analysis with tempered stable components. We derive the corresponding asset-level EVaR and EVaR-deviation contributions and use the latter to separate the fitted location term from EVaR risk contributions. Under Gaussian returns, EVaR-deviation IRP and ERC recover conventional volatility IRP and ERC weights. We evaluate the resulting portfolios in three investment universes. Empirically, EVaR-based ERC portfolios achieve positive Sharpe differences relative to equal weight across the universes.
\end{abstract}
\begin{keyword}
	Entropic Value-at-Risk, risk parity, tempered stable process, risk budgeting, tail risk
\end{keyword}
\maketitle
\section{Introduction}
	Risk parity constructs portfolio weights from risk characteristics rather than expected-return maximization or direct portfolio-risk minimization. For example, inverse risk parity (IRP) assigns portfolio weights inversely proportional to standalone volatility, and equal risk contribution (ERC) equalizes volatility contributions across assets \citep{qian2011risk,maillard2010properties}. However, a limitation of IRP and ERC is that volatility treats upside and downside variation symmetrically. Skewness, heavy tails, and nonlinear dependence in asset returns are also not reflected in volatility-based portfolio construction.

	Several studies extend risk parity beyond volatility-based measures and Gaussian return models. Conditional Value-at-Risk (CVaR) deviation and mean absolute deviation have been considered in risk budgeting \citep{cesarone2018cvar_deviation,ararat2024mad}. \citet{choi2025diversified} develop diversified reward--risk parity rules based on standalone reward--risk measures including pure risk measures such as VaR and CVaR as well as standard deviation. Tail-risk-based portfolio construction has also been explored under tempered stable models. \citet{mercuri2014parametric} construct parametric risk parity portfolios using modified VaR and modified Expected Shortfall, and \citet{kim2022portfolio} derives marginal VaR and CVaR for a multivariate normal tempered stable (MNTS) model. EVaR has also been used in risk budgeting. \citet{cajas2021entropic} examines risk parity portfolios based on EVaR in a sample-based exponential-cone framework, and \citet{dacosta2023riskbudgeting} evaluate EVaR-based risk budgeting algorithms with simulated Gaussian and Student \(t\) returns.

	Tempered stable models have also been used for EVaR applications. \citet{nedeltchev2026measuring} examine EVaR market-risk measurement with an exponential tempered stable model. \citet{choi2026evarportfolio} develops portfolio-level EVaR optimization for MNTS and ICA-based tempered stable models and derives the corresponding weight-dependent moment-generating-function domains, but does not consider risk budgeting. Existing EVaR-based risk budgeting work does not use tempered stable portfolio models, while tempered stable risk parity research relies on other tail-risk measures. EVaR-based risk budgeting for tempered stable portfolio models remains unaddressed. Such an extension requires asset-level EVaR contributions for ERC construction.

	In this paper, we consider EVaR-based IRP and ERC portfolios with MNTS and ICA-based tempered stable models. First, we derive asset-level marginal EVaR and Euler contributions for both representations and compare them with the corresponding CVaR contributions. Second, we use the EVaR-deviation decomposition to separate the fitted location term from EVaR risk contributions. For Gaussian returns, the EVaR deviation is proportional to volatility, and the corresponding IRP and ERC rules recover conventional volatility IRP and ERC. Third, we evaluate the resulting portfolios using cross-asset ETFs, momentum-sorted portfolios, and U.S.\ sector ETFs. We compare the resulting portfolios with equal weight and a Gaussian EVaR benchmark, examine matched EVaR and CVaR portfolios with and without transaction costs, and compare raw EVaR-ERC with EVaR-deviation ERC portfolios.

	The remainder of this paper is organized as follows. Section~\ref{sec_rp_formulations} develops the EVaR-based risk budgeting framework. Section~\ref{sec_ts_computation} presents the tempered stable representations and the calculation of risk contributions. Section~\ref{sec_result} reports the empirical results. Section~\ref{sec_conclusion} concludes the paper.

\section{EVaR-based risk budgeting}
\label{sec_rp_formulations}
	This section develops the EVaR-based risk budgeting framework. We first review conventional volatility-based IRP and ERC, define their EVaR counterparts, and introduce the EVaR deviation.

	Throughout this paper, bold symbols denote vectors. Let \(\boldsymbol{X}_t=(X_t^{(1)},\ldots,X_t^{(N)})^{\top}\) be the vector of asset returns at horizon \(t\). For the one-period empirical analysis, we write \(\boldsymbol{X}=\boldsymbol{X}_1\).

	We consider long-only fully invested portfolios with weights \(\w\in\Delta_N\), where
	\begin{align}
		\Delta_N=\{\w\in\mathbb{R}^N:\wt\boldsymbol{e}=1,\;w_i\ge0\},
	\end{align}
	and \(\boldsymbol{e}=(1,\ldots,1)^{\top}\). For a given weight vector, the portfolio return and loss are given by
	\begin{align}
	\label{eq_sign_convention}
		&R(\w)=\wt\boldsymbol{X},\\
		&L(\w)=-\wt\boldsymbol{X}=-R(\w).
	\end{align}

\subsection{Volatility-based IRP and ERC}
	The conventional IRP portfolio uses inverse standalone volatility. Let \(\boldsymbol{\Omega}\) denote the return covariance matrix, \(\mathrm{Cov}(\boldsymbol{X})\). The weights of the IRP portfolio are given by
	\begin{align}
	\label{eq_irp_var}
		w_i^{\mathrm{IRP}}=\frac{\sigma_i^{-1}}{\sum_{j=1}^N\sigma_j^{-1}},
	\end{align}
	where \(\sigma_i=\sqrt{\Omega_{ii}}\). IRP assigns larger weights to assets with lower volatility.
	
	The volatility contribution of asset \(i\) is defined as
	\begin{align}
		RC_i^{\mathrm{Vol}}(\w)=w_i\frac{\partial\sigma(\w)}{\partial w_i}=\frac{w_i(\boldsymbol{\Omega}\w)_i}{\sqrt{\wt\boldsymbol{\Omega}\w}},
	\end{align}
	where \(\sigma(\w)\) is the portfolio volatility, i.e., \(\sigma(\w)=\sqrt{\wt\boldsymbol{\Omega}\w}\).
	
	Since ERC equalizes volatility contributions across all assets \citep{maillard2010properties}, its defining condition is given by
	\begin{align}
	\label{eq_erc_var}
		RC_i^{\mathrm{Vol}}(\w)=RC_j^{\mathrm{Vol}}(\w),
	\end{align}
	for all \(i\) and \(j\), subject to \(\w\in\Delta_N\).

\subsection{Entropic Value-at-Risk}
\label{sec_evar_prelim}
	Following~\citet{ahmadi2012entropic}, let \(\eta\in(0,1)\) be the tail probability and \(1-\eta\) the confidence level. For a loss random variable \(L\), EVaR is defined as
	\begin{align}
	\label{eq_evar_def}
		\mathrm{EVaR}_{1-\eta}(L)=\inf_{u\in\mathcal{U}}\frac{\log M_L(u)-\ln\eta}{u},
	\end{align}
	where \(M_L(u)\) is the loss moment-generating function, and \(\mathcal{U}\) contains the positive values of \(u\) for which \(M_L(u)\) is finite. 
	
	As in~\citet{choi2026evarportfolio}, the distributional parameters are estimated from returns. When \(X\) denotes a return, \(L=-X\). We use \(\mathrm{EVaR}_{1-\eta}(X)\) as shorthand for the EVaR of the corresponding loss:
	\begin{align}
	\label{eq_evar_return_convention}
		\mathrm{EVaR}_{1-\eta}(X):=\mathrm{EVaR}_{1-\eta}(L)=\inf_{u\in\mathcal{U}}\frac{\log M_L(u)-\ln\eta}{u}=\inf_{u\in\mathcal{U}}\frac{\log M_X(-u)-\ln\eta}{u}=\inf_{u\in\mathcal{U}}\frac{\Psi_X(-u)-\ln\eta}{u},
	\end{align}
	where \(\Psi_X(z)=\log M_X(z)\) denotes the scalar cumulant-generating function (CGF). For a random vector \(\boldsymbol{X}\), the corresponding CGF is given by
	\begin{align}
		\Psi_{\boldsymbol{X}}(\boldsymbol{z})=\log\mathbb E[\exp(\boldsymbol{z}^\top\boldsymbol{X})].
	\end{align}
	
	Applying the return-side convention to the portfolio return \(\wt\boldsymbol{X}\), the portfolio EVaR is expressed as
	\begin{align}
	\label{eq_rho_evar_portfolio}
	\mathcal{D}_{\eta}(\w)
	=
	\mathrm{EVaR}_{1-\eta}(\wt\boldsymbol{X}).
	\end{align}

\subsection{EVaR-based IRP}
	Following~\citet{choi2025diversified}, the inverse-risk construction can extend conventional IRP from volatility to other risk measures. We apply this construction to EVaR. For asset \(i\), the standalone EVaR is defined as
	\begin{align}
	\label{eq_standalone_evar}
		\mathcal{D}_{\eta,i}	=\mathrm{EVaR}_{1-\eta}(X^{(i)}).
	\end{align}
	When all \(\mathcal{D}_{\eta,i}\) are positive, the EVaR-IRP weights based on inverse-risk normalization are given by
	\begin{align}
	\label{eq_irp_evar}
		w_i^{\mathrm{EVaR\text{-}IRP}}=\frac{(\mathcal{D}_{\eta,i})^{-1}}{\sum_{j=1}^N(\mathcal{D}_{\eta,j})^{-1}}.
	\end{align}
	The EVaR-IRP rule assigns larger weights to assets with smaller EVaR, as conventional IRP does for volatility. It does not equalize portfolio risk contributions.

\subsection{EVaR-based ERC}
	Risk budgeting can be defined for a broad class of positively homogeneous risk measures \citep{cetingoz2024riskbudgeting}. \citet{cajas2021entropic} considers risk parity portfolios based on EVaR in a sample-based exponential-cone setting. We apply the Euler risk budgeting condition to the parametric EVaR functional \(\mathcal D_{\eta}\). For the Euler decomposition, we consider \(\mathcal D_{\eta}\) before restricting the portfolio weights to \(\Delta_N\), and positive homogeneity gives the following relation for any \(c>0\):
	\begin{align}
		\mathcal D_{\eta}(c\w)=c\mathcal D_{\eta}(\w).
	\end{align}
	If \(\mathcal D_{\eta}\) is differentiable at \(\w\), differentiating this relation with respect to \(c\) and evaluating it at \(c=1\) gives
	\begin{align}
	\label{eq_evar_euler}
		\mathcal{D}_{\eta}(\w)=\sum_{i=1}^N	w_i\frac{\partial\mathcal{D}_{\eta}(\w)}{\partial w_i}.
	\end{align}
	From this, we define the EVaR Euler contribution of asset \(i\) by
	\begin{align}
	\label{eq_evar_rc}
		\mathrm{EVaRC}_{\eta,i}(\w)=w_i\frac{\partial\mathcal{D}_{\eta}(\w)}{\partial w_i},
	\end{align}
	and the EVaR-ERC portfolio equalizes these contributions for any assets \(i\) and \(j\):
	\begin{align}
	\label{eq_erc_evar}
		\mathrm{EVaRC}_{\eta,i}(\w)=\mathrm{EVaRC}_{\eta,j}(\w).
	\end{align}
	Using the Euler decomposition, the same condition assigns an equal share of total portfolio EVaR to each asset:
	\begin{align}
	\label{eq_evar_erc_equal}
		\mathrm{EVaRC}_{\eta,i}(\w)=\frac{1}{N}\mathcal{D}_{\eta}(\w).
	\end{align}
	This differs from minimum-EVaR optimization. ERC allocates portfolio EVaR across assets instead of minimizing total EVaR.

\subsection{EVaR deviation and location-free risk budgeting}
\label{sec_evar_deviation}
	Since EVaR is translation equivariant, its Euler contributions include the fitted location term. Centered coherent risk measures generate generalized deviations \citep{rockafellar2006generalized}. \citet{grechuk2026riskquadrangle} include the EVaR deviation in the divergence-based risk-quadrangle framework. We use this deviation to separate the fitted location term from the remaining distributional risk. Let \(\boldsymbol{\mu}=\mathbb E[\boldsymbol{X}]\) be the fitted mean and define \(\boldsymbol{X}^c=\boldsymbol{X}-\boldsymbol{\mu}\). The EVaR deviation is defined as
	\begin{align}
	\label{eq_evar_deviation}
		\mathcal D_{\eta}^{c}(\w)=	\mathrm{EVaR}_{1-\eta}\!\left(\wt\boldsymbol{X}^c\right).
	\end{align}
	Centering removes the fitted location term but leaves the remaining distributional structure unchanged.
	
	When the EVaR deviation is differentiable, we denote its Euler contribution for asset \(i\) by \(\mathrm{DEVaRC}_{\eta,i}\):
	\begin{align}
	\label{eq_evar_deviation_rc}
		\mathrm{DEVaRC}_{\eta,i}(\w)=w_i\frac{\partial\mathcal D_{\eta}^{c}(\w)}{\partial w_i}.
	\end{align}

	The coherent-risk/deviation correspondence above gives the following EVaR-specific location separation.

	\begin{prp}[EVaR location separation]
		\label{prp_location_separation}
		If the loss-side moment-generating-function domain is nonempty, the raw EVaR and EVaR-deviation functionals satisfy
		\begin{align}
		\label{eq_evar_deviation_translation}
			\mathcal D_{\eta}(\w)=	-\wt\boldsymbol{\mu}+\mathcal D_{\eta}^{c}(\w).
		\end{align}
		At weights where both functionals are differentiable, their Euler contributions satisfy
		\begin{align}
		\label{eq_rc_location_decomposition}
			\mathrm{EVaRC}_{\eta,i}(\w)=-w_i\mu_i+\mathrm{DEVaRC}_{\eta,i}(\w).
		\end{align}
	\end{prp}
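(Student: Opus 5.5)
The plan is to prove the functional identity directly from the definition \eqref{eq_evar_return_convention}, working at the level of the cumulant-generating function. Write the portfolio return as \(\wt\boldsymbol{X}=\wt\boldsymbol{\mu}+\wt\boldsymbol{X}^c\), where \(\wt\boldsymbol{\mu}\) is a deterministic scalar. For any \(u>0\), the loss-side moment-generating function then factorizes:
\begin{align*}
M_{L(\w)}(u)=\mathbb E\!\left[e^{-u\wt\boldsymbol{X}}\right]=e^{-u\wt\boldsymbol{\mu}}\,\mathbb E\!\left[e^{-u\wt\boldsymbol{X}^c}\right].
\end{align*}
Two consequences follow. First, the sets \(\mathcal U\) of admissible \(u\) for the raw and the centered loss coincide, because the factor \(e^{-u\wt\boldsymbol{\mu}}\) is finite and positive. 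Second, \(\Psi_{\wt\boldsymbol{X}}(-u)=-u\wt\boldsymbol{\mu}+\Psi_{\wt\boldsymbol{X}^c}(-u)\).

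Dividing by \(u\) gives
\begin{align*}
\frac{\Psi_{\wt\boldsymbol{X}}(-u)-\ln\eta}{u}=-\wt\boldsymbol{\mu}+\frac{\Psi_{\wt\boldsymbol{X}^c}(-u)-\ln\eta}{u}
\end{align*}
for every \(u\in\mathcal U\). Since the additive shift does not depend on \(u\), taking the infimum over the common domain yields \eqref{eq_evar_deviation_translation}.

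The nonemptiness hypothesis is what makes both infima taken over the same nonempty set. I would check that both sides are simultaneously finite or simultaneously equal to \(-\infty\). A pure translation cannot create a mismatch here, so this step is bookkeeping rather than a real obstacle.

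For the contribution identity, I would differentiate \eqref{eq_evar_deviation_translation} with respect to \(w_i\) at a point where both functionals are differentiable. Since \(\partial(-\wt\boldsymbol{\mu})/\partial w_i=-\mu_i\), we get \(\partial\mathcal D_\eta/\partial w_i=-\mu_i+\partial\mathcal D^c_\eta/\partial w_i\). Multiplying by \(w_i\) and using the definitions \eqref{eq_evar_rc} and \eqref{eq_evar_deviation_rc} gives \eqref{eq_rc_location_decomposition}.

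One point needs care. Because the identity \eqref{eq_evar_deviation_translation} must hold on an open neighborhood of \(\w\) for the derivatives to match, I would state it for unrestricted \(\w\in\mathbb R^N\), not only on \(\Delta_N\). This is consistent with how the Euler decomposition is set up before restricting to \(\Delta_N\). The first part of the argument already applies to every \(\w\) with a nonempty domain. Because the identity makes the two functionals differ by a linear term, differentiability of one actually implies differentiability of the other, so the joint differentiability assumption is harmless.
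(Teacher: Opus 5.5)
Your proposal is correct and takes essentially the same route as the paper. Both arguments use the CGF shift \(\Psi_{\boldsymbol{X}}(-u\w)=-u\wt\boldsymbol{\mu}+\Psi_{\boldsymbol{X}^c}(-u\w)\) on a common moment-generating-function domain, then a \(u\)-independent shift of the EVaR objective, then differentiation multiplied by \(w_i\); your version is slightly more careful in taking the infimum directly rather than appealing to a common minimizer, and in noting that the identity must hold on a neighborhood of \(\w\) (with unrestricted weights) for the derivatives to match.
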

	\begin{proof}
		The raw and centered CGFs differ only by the location term at the loss-side argument:
		\begin{align*}
			\Psi_{\boldsymbol{X}}(-u\w)	=-u\wt\boldsymbol{\mu}+\Psi_{\boldsymbol{X}^c}(-u\w).
		\end{align*}
		Since centering does not change the moment-generating-function domain, substituting the CGF decomposition into the EVaR objective gives
		\begin{align*}
			\frac{\Psi_{\boldsymbol{X}}(-u\w)-\ln\eta}{u}=-\wt\boldsymbol{\mu}+\frac{\Psi_{\boldsymbol{X}^c}(-u\w)-\ln\eta}{u}.
		\end{align*}
		Since the first term on the right-hand side does not depend on \(u\), the raw and centered objectives have the same minimizing \(u\). Eq.~\eqref{eq_evar_deviation_translation} follows directly. At a differentiable weight, differentiating with respect to \(w_i\) and multiplying by \(w_i\) gives Eq.~\eqref{eq_rc_location_decomposition}. 
	\end{proof}

	Applying the equal-contribution condition to the EVaR deviation, the EVaR-deviation risk contribution for each asset \(i\) satisfies
	\begin{align}
	\label{eq_evar_deviation_erc}
		\mathrm{DEVaRC}_{\eta,i}(\w)=\frac{1}{N}\mathcal D_{\eta}^{c}(\w).
	\end{align}

	\begin{crl}[Gaussian benchmark]
		\label{cor_gaussian_benchmark}
		Under the Gaussian benchmark \(\boldsymbol{X}\sim N(\boldsymbol{\mu},\boldsymbol{\Omega})\), the EVaR deviation is given by
		\begin{align}
		\label{eq_gaussian_evar_deviation}
			\mathcal D_{\eta}^{c}(\w)=k_\eta\sqrt{\wt\boldsymbol{\Omega}\w},
		\end{align}
		where \(k_\eta=\sqrt{-2\ln\eta}\). Its Euler contribution is given by
		\begin{align}
		\label{eq_gaussian_deviation_rc}
			\mathrm{DEVaRC}_{\eta,i}(\w)=k_\eta\frac{w_i(\boldsymbol{\Omega}\w)_i}{\sqrt{\wt\boldsymbol{\Omega}\w}}.
		\end{align}
		EVaR-deviation IRP and ERC give the same portfolio weights as conventional volatility IRP and ERC.
	\end{crl}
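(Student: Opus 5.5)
The plan is to compute the EVaR deviation directly from the Gaussian cumulant-generating function, then differentiate and compare the result with the volatility formulas.

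\textbf{Step 1: the CGF.} Under \(\boldsymbol{X}\sim N(\boldsymbol{\mu},\boldsymbol{\Omega})\), the centered vector \(\boldsymbol{X}^c\) is \(N(\boldsymbol{0},\boldsymbol{\Omega})\). Its CGF is \(\Psi_{\boldsymbol{X}^c}(\boldsymbol{z})=\tfrac12\boldsymbol{z}^\top\boldsymbol{\Omega}\boldsymbol{z}\). This is finite for all \(\boldsymbol{z}\), so the loss-side domain \(\mathcal U\) is all of \((0,\infty)\).

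\textbf{Step 2: the EVaR objective and its minimizer.} Write \(s^2=\wt\boldsymbol{\Omega}\w\). The objective in Eq.~\eqref{eq_evar_return_convention} applied to \(\wt\boldsymbol{X}^c\) becomes
\begin{align*}
f(u)=\frac{\tfrac12u^2s^2-\ln\eta}{u}=\tfrac12us^2+\frac{-\ln\eta}{u}.
\end{align*}
Both terms are positive, since \(-\ln\eta>0\). Assume first \(s>0\). By AM--GM, or by the first-order condition, the minimizer is \(u^*=\sqrt{-2\ln\eta}/s\). The minimum value is \(s\sqrt{-2\ln\eta}=k_\eta\sqrt{\wt\boldsymbol{\Omega}\w}\), which gives Eq.~\eqref{eq_gaussian_evar_deviation}.

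\textbf{Main obstacle: the degenerate case.} When \(s=0\), the infimum is \(0\), reached only in the limit \(u\to\infty\). The formula \(k_\eta s=0\) still holds, but differentiability fails there. I would restrict the contribution formula to weights with \(\wt\boldsymbol{\Omega}\w>0\), which holds for all \(\w\ne0\) when \(\boldsymbol{\Omega}\) is positive definite. As a consistency check, Proposition~\ref{prp_location_separation} then also recovers the familiar Gaussian raw EVaR, \(-\wt\boldsymbol{\mu}+k_\eta\sigma(\w)\).

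\textbf{Step 3: Euler contributions.} The deviation \(k_\eta\sigma(\w)\) is differentiable away from \(s=0\). Its gradient is \(k_\eta\boldsymbol{\Omega}\w/\sqrt{\wt\boldsymbol{\Omega}\w}\). Multiplying by \(w_i\) as in Eq.~\eqref{eq_evar_deviation_rc} gives Eq.~\eqref{eq_gaussian_deviation_rc}, so \(\mathrm{DEVaRC}_{\eta,i}=k_\eta RC_i^{\mathrm{Vol}}\).

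\textbf{Step 4: portfolio equivalence.}
\begin{itemize}
\item \emph{IRP.} The standalone deviation of asset \(i\) is the case \(\w=\boldsymbol{e}_i\), giving \(k_\eta\sigma_i\). In the inverse-risk normalization of Eq.~\eqref{eq_irp_evar}, the common factor \(k_\eta^{-1}\) cancels, which yields Eq.~\eqref{eq_irp_var}.
\item \emph{ERC.} Because \(k_\eta>0\), the condition \(\mathrm{DEVaRC}_{\eta,i}=\mathrm{DEVaRC}_{\eta,j}\) holds if and only if \(RC_i^{\mathrm{Vol}}=RC_j^{\mathrm{Vol}}\) on \(\Delta_N\). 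The two ERC conditions, Eq.~\eqref{eq_evar_deviation_erc} and Eq.~\eqref{eq_erc_var}, therefore have the same solution set.
\end{itemize}
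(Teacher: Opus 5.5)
Your proposal is correct and follows essentially the same route as the paper. You substitute the centered Gaussian CGF \(u^2\wt\boldsymbol{\Omega}\w/2\) into the EVaR objective, minimize over \(u>0\), differentiate, and observe that the common factor \(k_\eta\) cancels in both the inverse-risk normalization and the equal-contribution conditions; your explicit handling of the degenerate case \(\wt\boldsymbol{\Omega}\w=0\) is a small addition that the paper leaves implicit.
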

	\begin{proof}
		For centered Gaussian returns, \(\Psi_{\boldsymbol{X}^c}(-u\w)=u^2\wt\boldsymbol{\Omega}\w/2\). Substituting this expression into the EVaR-deviation objective and minimizing over \(u>0\) gives Eq.~\eqref{eq_gaussian_evar_deviation}. For a single asset, \(\mathcal D_{\eta,i}^{c}=k_\eta\sigma_i\). Inverse-risk normalization gives the conventional volatility IRP weights. Differentiating Eq.~\eqref{eq_gaussian_evar_deviation} gives Eq.~\eqref{eq_gaussian_deviation_rc}. The common factor \(k_\eta\) leaves the equal-contribution conditions unchanged, giving the conventional volatility ERC weights.
	\end{proof}

\section{Tempered stable representations and risk-contribution computation}
\label{sec_ts_computation}
	In this section, we extend the EVaR-based risk budgeting framework developed in Section~\ref{sec_rp_formulations} to the MNTS and ICA-based tempered stable models and derive the corresponding asset-level EVaR contributions. We use the portfolio representations, projected cumulants, and weight-dependent moment-generating-function domains developed in \citet{choi2026evarportfolio} as inputs to the risk-contribution analysis.

	Following \citet{choi2026evarportfolio}, let \(\mathcal A\in\{\mathrm{MNTS},\mathrm{ICA}\}\) denote the EVaR evaluation approach. We use \(\mathcal D_{\eta}^{\mathcal A}\) for portfolio EVaR under approach \(\mathcal A\), \(\mathrm{EVaRC}_{\eta,i}^{\mathcal A}\) for the corresponding EVaR risk contribution, and \(\mathrm{DEVaRC}_{\eta,i}^{\mathcal A}\) for the corresponding EVaR-deviation risk contribution. In the ICA approach, \(\mathcal F\) denotes either the normal tempered stable (NTS) or classical tempered stable (CTS) component family used in \citet{choi2026evarportfolio}.

\subsection{Portfolio representations and admissible domains}
\label{sec_distributional_routes}
	We summarize the MNTS and ICA portfolio representations and admissible domains used below.

\paragraph{MNTS approach}
	For the MNTS specification, the portfolio parameterization and its weight-dependent admissible domain follow~\citet{choi2026evarportfolio}. Let \((\alpha,\theta,\boldsymbol{\beta},\boldsymbol{\gamma},\boldsymbol{\mu},\boldsymbol{\rho})\) denote the fitted parameter set, where \(\alpha\) and \(\theta\) are the common NTS shape parameters, \(\boldsymbol{\beta}\) is the time-change loading vector, \(\boldsymbol{\gamma}\) is the diffusion-scale vector, \(\boldsymbol{\mu}\) is the location vector, and \(\boldsymbol{\rho}\) is the latent Brownian correlation matrix. The diffusion-scale and correlation parameters determine the diffusion covariance matrix
	\begin{align}
		\boldsymbol{\Sigma}=\mathrm{diag}(\boldsymbol{\gamma})\boldsymbol{\rho}\,\mathrm{diag}(\boldsymbol{\gamma}).
	\end{align}
	Since the common time-change term adds to the diffusion covariance, these parameters imply the one-period return covariance:
	\begin{align}
		\boldsymbol{\Omega}=\mathrm{Cov}(\boldsymbol{X})=\boldsymbol{\Sigma}+\frac{2-\alpha}{2\theta}\boldsymbol{\beta}\boldsymbol{\beta}^\top.
	\end{align}
	It also includes the covariance induced by the common time change through \(\boldsymbol{\beta}\). The two matrices \(\boldsymbol{\Omega}\) and \(\boldsymbol{\Sigma}\) coincide when \(\boldsymbol{\beta}=\boldsymbol{0}\).

	Since the MNTS distribution is closed under linear portfolio aggregation, a portfolio with weights \(\w\) remains NTS with parameters:
	\begin{align}
	\label{eq_mnts_projected_params}
		\bar\alpha=\alpha,\qquad
		\bar\theta=\theta,\qquad
		\bar\beta=\wt\boldsymbol{\beta},\qquad
		\bar\gamma=\sqrt{\wt\boldsymbol{\Sigma}\w},\qquad
		\bar\mu=\wt\boldsymbol{\mu}.
	\end{align}
	With these parameters, the portfolio CGF and its weight-dependent loss-side domain are given by
	\begin{align}
	\label{eq_mnts_cgf}
		\Psi_{\boldsymbol{X}}(-u\w)&=-u(\bar\mu-\bar\beta)	-\frac{2\theta}{\alpha}\left[g(u)^{\alpha/2}-1\right],\\
	\label{eq_mnts_domain_rp}
		\mathcal U_{\mathrm{MNTS}}(\w)&=\left(0,u_{\max}^{\mathrm{MNTS}}(\w)\right],\\
		u_{\max}^{\mathrm{MNTS}}(\w)&=\frac{\bar\beta+\sqrt{\bar\beta^2+2\theta\bar\gamma^2}}{\bar\gamma^2},
	\end{align}
	where \(g(u)\) is the quadratic term written as
	\begin{align}
	\label{eq_mnts_g}
		g(u)=1+\frac{\bar\beta}{\theta}u-\frac{\bar\gamma^2}{2\theta}u^2.
	\end{align}
	These quantities determine portfolio EVaR and the asset-level contributions below. For the EVaR deviation with MNTS distribution, \(\boldsymbol{\mu}\) is set to zero while \((\alpha,\theta,\boldsymbol{\beta},\boldsymbol{\gamma},\boldsymbol{\rho})\) remain unchanged.

\paragraph{ICA approach}
	For ICA, we use the component representation and admissible-domain characterization in~\citet{choi2026evarportfolio}. The one-period return representation and portfolio component exposures are given by
	\begin{align}
	\label{eq_ica_decomp}
		\boldsymbol{X}=\boldsymbol m+\mathbf A\boldsymbol S,\qquad \tw=\mathbf A^\top\w,
	\end{align}
	where \(\boldsymbol m\) is the sample centering vector used before ICA, \(\mathbf A\) is the mixing matrix, and the components are fitted independently under the chosen family \(\mathcal F\). The vector \(\boldsymbol m\) need not equal the fitted joint mean because the fitted component laws can have nonzero means. Independence of the fitted components gives the following portfolio CGF and admissible domain:
	\begin{align}
	\label{eq_ica_cgf}
		\Psi_{\boldsymbol{X}}(-u\w)&=-u\wt\boldsymbol m+\sum_{j=1}^{J}\Psi_{S^{(j)}}(-u\tilde w_j),\\
	\label{eq_ica_domain_rp}
		\mathcal U_{\mathrm{ICA}}(\w)&=\bigcap_{j=1}^{J}\{u>0:-u\tilde w_j\in\mathrm{dom}(M_{S^{(j)}})\}.
	\end{align}
	In the risk-contribution formulas below, \(i\) indexes assets and \(j\) indexes independent components.

	As shown in~\citet{choi2026evarportfolio}, for the NTS and CTS component specifications used here, Eq.~\eqref{eq_ica_domain_rp} can be written as
	\begin{align}
	\label{eq_ica_domain_endpoint_rp}
		\mathcal U_{\mathrm{ICA}}(\w)=\left(0,\min_{1\le j\le J}u_{j,\max}(\tilde w_j)\right],
	\end{align}
	where the component-specific NTS and CTS formulas for \(u_{j,\max}\) are given in~\citet{choi2026evarportfolio}. Only their dependence on \(\tw=\mathbf A^\top\w\) is needed here.

	For the EVaR-deviation formulation, the fitted component means and the fitted joint mean \(\boldsymbol{\mu}\) introduced in Section~\ref{sec_evar_deviation} are given by
	\begin{align}
	\label{eq_ica_component_mean}
		\boldsymbol{\mu}_S&=(\mu_{S,1},\ldots,\mu_{S,J})^\top,\qquad	\mu_{S,j}=\mathbb E[S^{(j)}]=\Psi'_{S^{(j)}}(0),\\
	\label{eq_ica_fitted_mean}
		\boldsymbol{\mu}&=\boldsymbol m+\mathbf A\boldsymbol{\mu}_S.
	\end{align}
	The centered return representation takes the form
	\begin{align}
	\label{eq_ica_centered_representation}
		\boldsymbol{X}^c=\mathbf A(\boldsymbol S-\boldsymbol{\mu}_S).
	\end{align}
	For each component, the centered CGF is given by
	\begin{align}
	\label{eq_ica_centered_component_cgf}
		\Psi^c_{S^{(j)}}(z)=\Psi_{S^{(j)}}(z)-z\mu_{S,j}.
	\end{align}

\subsection{EVaR risk contribution computation}
\label{sec_rc_computation}
	We first derive analytic EVaR risk contributions and present a finite-difference formulation that also applies when the EVaR optimum lies on a weight-dependent admissible-domain boundary.

\subsubsection{Risk contributions from analytic derivatives}
\label{sec_analytic_rc}
	Let \(u^\star(\w)\) be a unique minimizer of the scalar EVaR problem and suppose that it lies in the interior of the admissible domain. If the portfolio CGF is differentiable at \(-u^\star(\w)\w\), the envelope theorem gives
	\begin{align}
	\label{eq_envelope}
		\frac{\partial\mathcal D_{\eta}^{\mathcal A}(\w)}{\partial w_i}=-\frac{\partial\Psi_{\boldsymbol{X}}(\boldsymbol z)}{\partial z_i}\bigg|_{\boldsymbol z=-u^\star(\w)\w}.
	\end{align}
	\begin{prp}[MNTS EVaR risk contributions]
	Suppose that the MNTS EVaR problem has a unique interior minimizer \(u^\star(\w)\), and let \(g^\star=g(u^\star(\w))\). The asset-level EVaR Euler contribution is given by
	\begin{align}
	\label{eq_rc_mnts}
		\mathrm{EVaRC}_{\eta,i}^{\mathrm{MNTS}}(\w)=w_i\Bigg[-\mu_i+\beta_i\{1-(g^\star)^{\alpha/2-1}\}+u^\star(\w)(\boldsymbol{\Sigma}\w)_i(g^\star)^{\alpha/2-1}\Bigg].
	\end{align}
	The corresponding EVaR-deviation contribution is given by
	\begin{align}
	\label{eq_rc_mnts_centered}
		\mathrm{DEVaRC}_{\eta,i}^{\mathrm{MNTS}}(\w)=\mathrm{EVaRC}_{\eta,i}^{\mathrm{MNTS}}(\w)+w_i\mu_i.
	\end{align}
	\end{prp}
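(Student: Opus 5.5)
The plan is to apply the envelope formula Eq.~\eqref{eq_envelope}. This requires the gradient of the MNTS portfolio CGF, viewed as a function of the full vector argument \(\boldsymbol z\) rather than only along the ray \(-u\w\). I would first write the joint MNTS CGF in vector form,
\begin{align*}
\Psi_{\boldsymbol X}(\boldsymbol z)=\boldsymbol z^\top(\boldsymbol\mu-\boldsymbol\beta)-\frac{2\theta}{\alpha}\left[G(\boldsymbol z)^{\alpha/2}-1\right],\qquad
G(\boldsymbol z)=1-\frac{\boldsymbol z^\top\boldsymbol\beta}{\theta}-\frac{\boldsymbol z^\top\boldsymbol\Sigma\boldsymbol z}{2\theta}.
\end{align*}
At \(\boldsymbol z=-u\w\) this reduces to Eq.~\eqref{eq_mnts_cgf}, with \(G(-u\w)=g(u)\) by Eq.~\eqref{eq_mnts_projected_params} and Eq.~\eqref{eq_mnts_g}. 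This is the standard NTS subordinated-Gaussian form underlying the projection in \citet{choi2026evarportfolio}, so I would take it as the fitted model's joint CGF.

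Next, I would differentiate with respect to \(z_i\). The chain rule gives
\begin{align*}
\frac{\partial\Psi_{\boldsymbol X}}{\partial z_i}=\mu_i-\beta_i-\frac{2\theta}{\alpha}\cdot\frac{\alpha}{2}G^{\alpha/2-1}\left(-\frac{\beta_i}{\theta}-\frac{(\boldsymbol\Sigma\boldsymbol z)_i}{\theta}\right)=\mu_i-\beta_i+G^{\alpha/2-1}\bigl(\beta_i+(\boldsymbol\Sigma\boldsymbol z)_i\bigr).
\end{align*}
Evaluating at \(\boldsymbol z=-u^\star\w\) gives \(G=g^\star\) and \((\boldsymbol\Sigma\boldsymbol z)_i=-u^\star(\boldsymbol\Sigma\w)_i\). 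Negating as in Eq.~\eqref{eq_envelope} yields
\begin{align*}
\frac{\partial\mathcal D^{\mathrm{MNTS}}_\eta}{\partial w_i}=-\mu_i+\beta_i\{1-(g^\star)^{\alpha/2-1}\}+u^\star(\boldsymbol\Sigma\w)_i(g^\star)^{\alpha/2-1},
\end{align*}
and multiplying by \(w_i\) gives Eq.~\eqref{eq_rc_mnts}.

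The step that needs the most care is justifying Eq.~\eqref{eq_envelope}, in particular that \(g^\star>0\) so the CGF is smooth there. The domain is weight-dependent: \(u^{\mathrm{MNTS}}_{\max}(\w)\) is the positive root of \(g\). However, an interior minimizer satisfies \(u^\star<u_{\max}\), which gives \(g^\star>0\). Smoothness of \((u,\w)\mapsto(\Psi-\ln\eta)/u\) near \((u^\star,\w)\), together with uniqueness, then lets the envelope theorem apply with the partial derivative in \(\w\) at fixed \(u^\star\). This matches the stated hypothesis.

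For Eq.~\eqref{eq_rc_mnts_centered}, I would invoke Proposition~\ref{prp_location_separation}, specifically Eq.~\eqref{eq_rc_location_decomposition}. It applies because centering in the MNTS model only sets \(\boldsymbol\mu=\boldsymbol 0\) in the CGF above. This leaves \(g\) and \(u^\star\) unchanged, so the centered problem is differentiable at the same weights. Alternatively, I could repeat the derivative computation with \(\boldsymbol\mu=\boldsymbol 0\) and read off the difference \(w_i\mu_i\).
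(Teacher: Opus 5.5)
Your proposal is correct and follows the paper's proof. The paper also combines the envelope formula with explicit differentiation of the MNTS cumulant-generating function: it differentiates the ray-restricted form \(\Psi_{\boldsymbol X}(-u\w)\) in \(w_i\) through the projected parameters \(\bar\mu,\bar\beta,\bar\gamma^2\), which is the same computation as your \(z_i\)-gradient of the joint CGF evaluated at \(-u^\star\w\). It then obtains the deviation formula from Proposition~\ref{prp_location_separation}, as you do. Your observation that interiority gives \(g^\star>0\), so the CGF is smooth at the tilt, is a useful detail the paper leaves implicit.
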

	\begin{proof}
	From Eqs.~\eqref{eq_mnts_projected_params}--\eqref{eq_mnts_cgf}, differentiation of \(\Psi_{\boldsymbol{X}}(-u\w)\) with respect to \(w_i\) gives
	\begin{align*}
		\frac{1}{u}\frac{\partial\Psi_{\boldsymbol{X}}(-u\w)}{\partial w_i}=-\mu_i+\beta_i\{1-g(u)^{\alpha/2-1}\}+u(\boldsymbol{\Sigma}\w)_i g(u)^{\alpha/2-1}.
	\end{align*}
	Evaluating at \(u=u^\star(\w)\) and multiplying by \(w_i\) gives Eq.~\eqref{eq_rc_mnts}. Eq.~\eqref{eq_rc_mnts_centered} follows from Proposition~\ref{prp_location_separation} and Eq.~\eqref{eq_rc_location_decomposition}.
	\end{proof}

	\begin{prp}[ICA EVaR risk contributions]
	Suppose that the ICA EVaR problem has a unique interior minimizer \(u^\star(\w)\). The asset-level EVaR Euler contribution is given by
	\begin{align}
	\label{eq_rc_ica}
		\mathrm{EVaRC}_{\eta,i}^{\mathrm{ICA}}(\w)=w_i\left[-m_i-\sum_{j=1}^{J}A_{ij}\Psi_{S^{(j)}}'(-u^\star(\w)\tilde w_j)\right].
	\end{align}
	The corresponding EVaR-deviation contribution is given by
	\begin{align}
		\mathrm{DEVaRC}_{\eta,i}^{\mathrm{ICA}}(\w)
		&=\mathrm{EVaRC}_{\eta,i}^{\mathrm{ICA}}(\w)+w_i\mu_i\notag\\
		&=-w_i\sum_{j=1}^J A_{ij}\left[\Psi_{S^{(j)}}'(-u^\star(\w)\tilde w_j)-\mu_{S,j}\right]\notag\\
	\label{eq_rc_ica_deviation}
		&=-w_i\sum_{j=1}^J A_{ij}(\Psi^c_{S^{(j)}})'(-u^\star(\w)\tilde w_j).
	\end{align}
	\end{prp}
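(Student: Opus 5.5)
The plan mirrors the MNTS proof: apply the envelope formula, Eq.~\eqref{eq_envelope}, to the ICA portfolio CGF of Eq.~\eqref{eq_ica_cgf}, then use Proposition~\ref{prp_location_separation} for the deviation version.

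\textbf{Step 1 (gradient of the CGF).} Write the ICA joint CGF at a general argument \(\boldsymbol z\). Since \(\boldsymbol X=\boldsymbol m+\mathbf A\boldsymbol S\) with independent components,
\begin{align*}
\Psi_{\boldsymbol X}(\boldsymbol z)=\boldsymbol z^\top\boldsymbol m+\sum_{j=1}^J\Psi_{S^{(j)}}\big((\mathbf A^\top\boldsymbol z)_j\big).
\end{align*}
The chain rule then gives
\begin{align*}
\frac{\partial\Psi_{\boldsymbol X}(\boldsymbol z)}{\partial z_i}=m_i+\sum_j A_{ij}\Psi'_{S^{(j)}}\big((\mathbf A^\top\boldsymbol z)_j\big).
\end{align*}

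\textbf{Step 2 (evaluate and apply the envelope formula).} Evaluate at \(\boldsymbol z=-u^\star\w\), where \((\mathbf A^\top\boldsymbol z)_j=-u^\star\tilde w_j\). Substituting into Eq.~\eqref{eq_envelope} and multiplying by \(w_i\) gives Eq.~\eqref{eq_rc_ica}.

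\textbf{Step 3 (hypotheses of the envelope formula).} I must check that the envelope formula is legitimate here. We need differentiability of each \(\Psi_{S^{(j)}}\) at \(-u^\star\tilde w_j\). This holds because an interior minimizer \(u^\star\) lies strictly below \(\min_j u_{j,\max}\) by Eq.~\eqref{eq_ica_domain_endpoint_rp}. Each argument \(-u^\star\tilde w_j\) is therefore in the interior of the MGF domain, where a CGF is analytic. Uniqueness and interiority of \(u^\star\) are assumed, so the envelope theorem applies as stated before Eq.~\eqref{eq_envelope}. This domain-interior check is the only nonroutine point, and I expect it to be the main obstacle. If more care is wanted, I would argue local smoothness of \(u^\star(\w)\) via the implicit function theorem on the first-order condition. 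Alternatively, the envelope identity can be obtained directly from the one-sided bounds of \(\inf\).

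\textbf{Step 4 (deviation contribution).} Proposition~\ref{prp_location_separation}, Eq.~\eqref{eq_rc_location_decomposition}, gives \(\mathrm{DEVaRC}=\mathrm{EVaRC}+w_i\mu_i\). Substituting \(\mu_i=m_i+\sum_jA_{ij}\mu_{S,j}\) from Eq.~\eqref{eq_ica_fitted_mean}, the \(m_i\) terms cancel, which yields the second line. Finally, \((\Psi^c_{S^{(j)}})'(z)=\Psi'_{S^{(j)}}(z)-\mu_{S,j}\) from Eq.~\eqref{eq_ica_centered_component_cgf} gives Eq.~\eqref{eq_rc_ica_deviation}. Since centering leaves the domain unchanged, the deviation minimizer is the same \(u^\star\), so its differentiability follows from the same argument.
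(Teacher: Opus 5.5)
Your proposal is correct and follows the paper's proof. The paper also applies the envelope formula, Eq.~\eqref{eq_envelope}, to the ICA CGF in Eq.~\eqref{eq_ica_cgf} using the chain rule with \(\partial\tilde w_j/\partial w_i=A_{ij}\), and obtains the deviation version from Eqs.~\eqref{eq_ica_centered_component_cgf} and~\eqref{eq_rc_location_decomposition}. Three of your steps are refinements the paper leaves implicit: differentiating in \(\boldsymbol z\) rather than directly in \(w_i\), substituting \(\boldsymbol{\mu}=\boldsymbol m+\mathbf A\boldsymbol{\mu}_S\) explicitly, and checking that each \(-u^\star\tilde w_j\) lies in the interior of the component MGF domain.
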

	\begin{proof}
	Differentiating Eq.~\eqref{eq_ica_cgf} with respect to \(w_i\) and using \(\partial\tilde w_j/\partial w_i=A_{ij}\) gives
	\begin{align}
	\label{eq_ica_grad}
		\frac{1}{u}\frac{\partial\Psi_{\boldsymbol{X}}(-u\w)}{\partial w_i}=-m_i-\sum_{j=1}^{J}A_{ij}\Psi_{S^{(j)}}'(-u\tilde w_j).
	\end{align}
	Evaluating at \(u=u^\star(\w)\) and multiplying by \(w_i\) gives Eq.~\eqref{eq_rc_ica}. Eq.~\eqref{eq_rc_ica_deviation} follows from Eqs.~\eqref{eq_ica_centered_component_cgf} and~\eqref{eq_rc_location_decomposition}.
	\end{proof}

	The MNTS contribution depends on the common time-change loading \(\boldsymbol{\beta}\), diffusion covariance \(\boldsymbol{\Sigma}\), and the tilted factor \((g^\star)^{\alpha/2-1}\). The ICA counterpart depends on the mixing matrix and the component CGF derivatives evaluated at the portfolio-specific exponential tilt.

\subsubsection{Risk contributions from finite differences}
\label{sec_fd_implementation}
	The analytic contributions above require a unique interior EVaR minimizer. A finite-difference formulation can also be used when this interiority condition does not hold.

	We calculate the coordinate derivative of the fitted EVaR functional using a central difference:
	\begin{align}
	\label{eq_fd_grad}
		\partial_i\mathcal D_{\eta}^{\mathcal A}(\w)=\frac{\mathcal D_{\eta}^{\mathcal A}(\w+h e_i)-\mathcal D_{\eta}^{\mathcal A}(\w-h e_i)}{2h},
	\end{align}
	where EVaR and its admissible moment-generating-function domain are recalculated for each perturbed portfolio. In particular, this captures changes in the MNTS endpoint in Eq.~\eqref{eq_mnts_domain_rp} and the ICA endpoint in Eq.~\eqref{eq_ica_domain_endpoint_rp}. Multiplying the finite-difference derivative by the portfolio weight gives the numerical Euler contribution
	\begin{align}
	\label{eq_fd_rc}
		\mathrm{EVaRC}_{\eta,i}^{\mathcal A}(\w)=w_i\partial_i\mathcal D_{\eta}^{\mathcal A}(\w).
	\end{align}
	For the EVaR deviation, we apply the same finite-difference formulas after replacing \(\mathcal D_{\eta}^{\mathcal A}\) and \(\mathrm{EVaRC}_{\eta,i}^{\mathcal A}\) with \(\mathcal D_{\eta}^{c,\mathcal A}\) and \(\mathrm{DEVaRC}_{\eta,i}^{\mathcal A}\). Since the perturbations approximate unconstrained coordinate derivatives at a feasible portfolio, the perturbed weights need not satisfy the budget constraint.
	
	At an active endpoint, a weight perturbation also changes the admissible endpoint. Recomputing EVaR at \(\w\pm h e_i\) captures this change except at nonsmooth switching points.

\subsection{EVaR--CVaR contribution comparison}
\label{sec_evar_cvar_contribution_comparison}
	We compare EVaR and CVaR risk contributions under the same fitted return distribution. We denote the CVaR risk contribution by \(\mathrm{CVaRC}_{\eta,i}^{\mathcal A}\) and the corresponding CVaR-deviation risk contribution by \(\mathrm{DCVaRC}_{\eta,i}^{\mathcal A}\).

\paragraph{MNTS approach}
	For a continuous return \(R\), return-side CVaR is defined as \(-\mathbb E[R\mid R\le q_\eta(R)]\), where \(q_\eta(R)\) is the lower-tail \(\eta\)-quantile. \citet{kim2022portfolio} derives closed-form marginal VaR and CVaR contributions for the multivariate NTS market model. We write that CVaR result in the notation used here and define
	\begin{align}
	&s(\w)=\sqrt{\wt\boldsymbol{\Omega}\w},\\
	&b(\w)=\frac{\wt\boldsymbol{\beta}}{s(\w)},
	\end{align}
	and let \(C_\eta(b)\) denote the return-side CVaR of the corresponding mean-zero, unit-variance standardized NTS variable for fixed shape parameters \((\alpha,\theta)\) and standardized asymmetry coordinate \(b\). Combining the portfolio scale \(s(\w)\) and standardized asymmetry \(b(\w)\) with this tail functional gives portfolio CVaR as
	\begin{align}
	\label{eq_cvar_mnts_std}
		\operatorname{CVaR}_{1-\eta}^{\mathrm{MNTS}}(\w)=-\wt\boldsymbol{\mu}+s(\w)C_\eta\!\left(b(\w)\right).
	\end{align}
	Differentiating the standardized MNTS representation with respect to the weights gives the Euler contribution when \(C_\eta(b)\) is differentiable:
	\begin{align}
	\label{eq_rc_cvar_mnts}
		\mathrm{CVaRC}_{\eta,i}^{\mathrm{MNTS}}(\w)=w_i\Bigg[-\mu_i+C_\eta(b)\frac{(\boldsymbol{\Omega}\w)_i}{s}+C_\eta'(b)\left\{\beta_i-b\frac{(\boldsymbol{\Omega}\w)_i}{s}\right\}
	\Bigg],
	\end{align}
	where \(s=s(\w)\) and \(b=b(\w)\). Removing the same fitted location term from the MNTS CVaR contribution gives the CVaR-deviation contribution
	\begin{align}
	\label{eq_rc_cvar_mnts_centered}
		\mathrm{DCVaRC}_{\eta,i}^{\mathrm{MNTS}}(\w)=\mathrm{CVaRC}_{\eta,i}^{\mathrm{MNTS}}(\w)+w_i\mu_i.
	\end{align}
	The common fitted location term cancels from the EVaR--CVaR contribution difference:
	\begin{align}
	\label{eq_mnts_centered_evar_cvar_difference}
		\mathrm{DEVaRC}_{\eta,i}^{\mathrm{MNTS}}(\w)-\mathrm{DCVaRC}_{\eta,i}^{\mathrm{MNTS}}(\w)
		=\mathrm{EVaRC}_{\eta,i}^{\mathrm{MNTS}}(\w)-\mathrm{CVaRC}_{\eta,i}^{\mathrm{MNTS}}(\w).
	\end{align}
	The remaining tail-allocation terms differ. EVaR uses the exponentially tilted CGF in Eq.~\eqref{eq_rc_mnts}, while CVaR uses the standardized NTS tail functional and its asymmetry derivative in Eq.~\eqref{eq_rc_cvar_mnts}.

\paragraph{ICA approach}
	For a continuous portfolio-return distribution, the return-side CVaR Euler derivative equals minus the lower-tail conditional mean of the asset return \citep{rockafellar2000optimization,rockafellar2002conditional}. Define the lower-tail quantile of the ICA source portfolio by
	\begin{align}
	\label{eq_ica_source_quantile}
		q_{\eta}^{S}(\w)=\inf\left\{x:\Pr\left(\sum_{k=1}^{J}\tilde w_k S^{(k)}\le x\right)\ge\eta\right\}.
	\end{align}
	Conditioning each component on this portfolio tail event gives the corresponding component tail means:
	\begin{align}
	\label{eq_ica_cvar_tail_mean}
		\tau_{\eta,j}(\w)=\mathbb E\left[S^{(j)}\,\middle|\,\sum_{k=1}^{J}\tilde w_k S^{(k)}\le q_{\eta}^{S}(\w)\right].
	\end{align}
	These tail quantities determine portfolio CVaR and the corresponding asset-level Euler contributions:
	\begin{align}
	\label{eq_cvar_ica}
		\operatorname{CVaR}_{1-\eta}^{\mathrm{ICA}}(\w)=-\wt\boldsymbol m-\sum_{j=1}^{J}\tilde w_j\tau_{\eta,j}(\w),\\
	\label{eq_rc_cvar_ica}
		\mathrm{CVaRC}_{\eta,i}^{\mathrm{ICA}}(\w)=w_i\left[-m_i-\sum_{j=1}^{J}A_{ij}\tau_{\eta,j}(\w)\right].
	\end{align}
	The ICA components are independent, but the tail event in Eq.~\eqref{eq_ica_cvar_tail_mean} depends on their weighted sum. Eqs.~\eqref{eq_cvar_ica}--\eqref{eq_rc_cvar_ica} are exact tail-conditional formulas, but they do not reduce to a scalar marginal formula for general ICA+NTS or ICA+CTS specifications. Since centering changes only the fitted mean, the ICA CVaR-deviation contribution becomes
	\begin{align}
		\mathrm{DCVaRC}_{\eta,i}^{\mathrm{ICA}}(\w)&=\mathrm{CVaRC}_{\eta,i}^{\mathrm{ICA}}(\w)+w_i\mu_i\notag\\
	\label{eq_rc_cvar_ica_centered}
		&=-w_i\sum_{j=1}^{J}A_{ij}\{\tau_{\eta,j}(\w)-\mu_{S,j}\}.
	\end{align}
	
	For ICA, the distinction between EVaR and CVaR can be expressed through the EVaR tilted component mean
	\begin{align}
	\label{eq_ica_evar_tilted_mean}
		\xi_{\eta,j}^{\mathrm{EVaR}}(\w)=\Psi_{S^{(j)}}'(-u^\star(\w)\tilde w_j).
	\end{align}
	In Eqs.~\eqref{eq_rc_ica} and~\eqref{eq_rc_cvar_ica}, the two contributions use the same mixing matrix but different component quantities. EVaR uses an Esscher-type exponentially tilted component mean, while CVaR uses the component mean conditional on the portfolio lower-tail event. For the corresponding deviations, the common fitted location term cancels and the contribution difference reduces to
	\begin{align}
	\label{eq_ica_centered_evar_cvar_difference}
		\mathrm{DEVaRC}_{\eta,i}^{\mathrm{ICA}}(\w)-\mathrm{DCVaRC}_{\eta,i}^{\mathrm{ICA}}(\w)=-w_i\sum_{j=1}^{J}A_{ij}\left\{\xi_{\eta,j}^{\mathrm{EVaR}}(\w)-\tau_{\eta,j}(\w)\right\}.
	\end{align}
	Eqs.~\eqref{eq_mnts_centered_evar_cvar_difference} and~\eqref{eq_ica_centered_evar_cvar_difference} show that centering removes the common fitted location term in both representations. For MNTS, the remaining difference is between EVaR and CVaR contributions evaluated from the same projected NTS distribution. For ICA, it is driven by the difference between exponentially tilted and lower-tail conditional component means. This distinction provides the structural interpretation for the matched EVaR--CVaR comparisons in Section~\ref{sec_result}.

\section{Results}
\label{sec_result}
	We evaluate the resulting risk parity portfolios in three investment universes. The empirical analysis covers performance results relative to the equal weight portfolio and the Gaussian EVaR benchmark, matched EVaR--CVaR comparisons, transaction costs and turnover, and the effect of removing the fitted location term from EVaR-ERC.

\subsection{Dataset and methodology}
	This subsection describes the three investment universes, the portfolio construction, and the performance evaluation used in the backtests.
\subsubsection{Dataset for backtesting}
\paragraph{Cross-asset ETF universe}
	The first universe, denoted XASSET, consists of seven broad ETFs: U.S.\ equities (VTI), international developed equities (EFA), emerging-market equities (VWO), real estate (VNQ), aggregate U.S.\ bonds (AGG), inflation-protected Treasury bonds (TIP), and gold (GLD). The universe includes asset classes with different volatility, correlation, and tail characteristics. We obtain adjusted daily prices and returns from Yahoo Finance from April~1,~2005 to March~31,~2026. After the initial twelve-month estimation window, the out-of-sample period runs from April~2006 to March~2026.

\paragraph{Momentum-decile universe}
	The second universe, denoted MOM10, consists of ten momentum-sorted portfolios based on prior 12--2 returns~\citep{jegadeesh1993returns}. We obtain the value-weighted decile portfolio returns from the Kenneth R. French Data Library at Dartmouth College. Although the French momentum series extends further back, we use the portfolio returns from January~1995 to March~2026 for the empirical test in this paper. Since the first twelve months are used for the initial estimation window, the out-of-sample period begins in January~1996. Since the decile portfolios share a common equity component and a cross-sectional momentum structure, their dependence structure is different from that of XASSET.

\paragraph{SPDR sector ETF universe}
	The third universe, denoted SECTOR, consists of the State Street Select Sector SPDR ETFs. The original universe contains XLB, XLE, XLF, XLI, XLK, XLP, XLU, XLV, and XLY based on U.S.\ S\&P 500 equity market exposures. XLRE and XLC enter in October~2015 and in June~2018, respectively. We obtain adjusted daily prices and returns from Yahoo Finance. The investable universe varies from nine to eleven ETFs over the out-of-sample period from January~2000 to March~2026. At each rebalance date, we use only ETFs with full price history for the corresponding lookback window.

\subsubsection{Portfolio construction}
	The three universes use the same portfolio construction methodology. At each monthly rebalance date, we use the previous twelve months of daily returns to estimate the model parameters and hold the resulting portfolio for one month. All portfolios are long-only and fully invested. We set the tail probability to \(\eta=0.05\), corresponding to a 95\% confidence level for EVaR and CVaR. For the empirical EVaR-ERC and matched CVaR-ERC portfolios, risk contributions are computed using the finite-difference procedure in Section~\ref{sec_fd_implementation}. The fitted parameters are held fixed across weight perturbations, and each perturbation requires a new portfolio risk evaluation but not a new distribution fit. Portfolio CVaR is evaluated as in~\citet{choi2026evarportfolio}. Table~\ref{tab:pair_design_all_lb12_h1_rp} summarizes the matched design. Normal EVaR is reported as a benchmark. For EVaR-deviation ERC, Corollary~\ref{cor_gaussian_benchmark} shows that the Gaussian equal-contribution condition gives the same portfolio weights as conventional volatility ERC. MNTS, ICA+NTS, and ICA+CTS are evaluated with both EVaR and CVaR under IRP and ERC. We include IRP as the inverse-risk benchmark and ERC as the contribution-based risk budgeting portfolio.

\begingroup
\begin{table}[!t]
\centering
\caption{Experimental design of the risk-parity constructions. E denotes the EVaR-based variant and C the matched CVaR-based variant. The Normal specification is evaluated only for the entropic variant. }
\label{tab:pair_design_all_lb12_h1_rp}
\begin{tabular}{lcccc}
\toprule
 & Normal & MNTS projection & ICA$+$NTS & ICA$+$CTS \\
\midrule
IRP & E & E $+$ C & E $+$ C & E $+$ C \\
ERC & E & E $+$ C & E $+$ C & E $+$ C \\
\bottomrule
\end{tabular}
\end{table}
\endgroup

\subsubsection{Performance evaluation}
	We report cumulative return, annualized volatility, Sharpe ratio, maximum drawdown, turnover, and the additional performance measures used in the underlying backtests. Turnover is drift-adjusted one-way turnover per rebalance, and transaction costs are applied proportionally.

	For the Sharpe ratio comparisons, the null hypothesis is equality of the two population Sharpe ratios. We calculate two-sided \(p\)-values with a studentized circular block bootstrap following~\citet[Section 3]{ledoit2008robust}, using \(B=4{,}999\) resamples and block length \(L=\lfloor T^{1/3}\rfloor\). The tests are pairwise and are not adjusted for multiple testing.

\subsection{Benchmark comparisons}
\label{sec_equal_weight_results}
	We first assess the entropic risk parity portfolios against the equal weight portfolio and the Gaussian EVaR benchmark. Table~\ref{tab:benchmark_diff_sig_all_lb12_h1_rp} compares the entropic risk parity portfolios with the equal weight portfolio. These comparisons do not isolate the effect of EVaR because both portfolio construction and the risk measure differ from the equal-weight benchmark. ERC achieves a positive Sharpe difference in all nine market--specification comparisons. IRP obtains a positive difference in eight of the nine comparisons. The only negative IRP difference is ICA+CTS in XASSET. Nine of the 18 Sharpe differences have \(p\)-values below 0.10, including four below 0.05.

\begingroup
\begin{table}[!t]
\centering
\caption{Entropic risk-parity portfolios versus the equal-weight benchmark. The table reports the difference (portfolio minus benchmark) in CAGR and annualised volatility (percentage points), Sharpe ratio, and Calmar ratio. $^{*}$, $^{**}$, and $^{***}$ denote significance at the 10\%, 5\%, and 1\% levels. }
\label{tab:benchmark_diff_sig_all_lb12_h1_rp}
\small
\setlength{\tabcolsep}{5pt}
\begin{tabular}{lllrrrrr}
\toprule
 &  & & \multicolumn{5}{c}{vs EW} \\
\cmidrule(lr){4-8}
Market & Construction & Approach & $\Delta$CAGR & $\Delta$Vol & $\Delta$SR & $p$ & $\Delta$Calmar \\
\midrule
SECTOR & IRP & MNTS & -0.03 & -0.99 & +0.019 & 0.200 & +0.010 \\
 &  & ICA$+$NTS & +0.05 & -0.90 & +0.021 & 0.216 & +0.012 \\
 &  & ICA$+$CTS & +0.02 & -0.92 & +0.020 & 0.290 & +0.014 \\
\addlinespace
 & ERC & MNTS & -0.05 & -1.14 & +0.022 & 0.209 & +0.009 \\
 &  & ICA$+$NTS & +0.55 & -1.00 & +0.051$^{**}$ & 0.023 & +0.027 \\
 &  & ICA$+$CTS & +0.41 & -0.96 & +0.042$^{*}$ & 0.059 & +0.025 \\
\addlinespace
MOM10 & IRP & MNTS & +0.00 & -0.82 & +0.015 & 0.238 & +0.007 \\
 &  & ICA$+$NTS & +0.12 & -0.83 & +0.021 & 0.136 & +0.010 \\
 &  & ICA$+$CTS & +0.24 & -0.83 & +0.026$^{*}$ & 0.076 & +0.011 \\
\addlinespace
 & ERC & MNTS & -0.16 & -0.78 & +0.007 & 0.552 & +0.004 \\
 &  & ICA$+$NTS & +0.28 & -0.79 & +0.028$^{*}$ & 0.052 & +0.013 \\
 &  & ICA$+$CTS & +0.39 & -0.79 & +0.032$^{**}$ & 0.035 & +0.014 \\
\addlinespace
XASSET & IRP & MNTS & -1.50 & -5.68 & +0.177$^{**}$ & 0.016 & +0.048 \\
 &  & ICA$+$NTS & -1.41 & -5.16 & +0.142$^{*}$ & 0.057 & +0.043 \\
 &  & ICA$+$CTS & -2.27 & -2.95 & -0.077 & 0.710 & -0.074 \\
\addlinespace
 & ERC & MNTS & -1.52 & -6.37 & +0.246$^{**}$ & 0.011 & +0.089 \\
 &  & ICA$+$NTS & -1.45 & -5.53 & +0.170$^{*}$ & 0.086 & +0.076 \\
 &  & ICA$+$CTS & -1.52 & -4.77 & +0.100 & 0.511 & -0.017 \\
\bottomrule
\end{tabular}
\end{table}
\endgroup

	XASSET achieves the largest Sharpe improvements over equal weight. MNTS IRP and ERC exceed equal weight by 0.177 and 0.246 in Sharpe ratio, with \(p=0.016\) and \(p=0.011\). ICA+NTS IRP and ERC also exhibit positive differences of \(+0.142\) and \(+0.170\), with \(p=0.057\) and \(p=0.086\). These portfolios obtain lower CAGR than equal weight and substantially lower volatility. ICA+CTS ERC remains above equal weight by \(+0.100\), but ICA+CTS IRP has a difference of \(-0.077\). Neither ICA+CTS difference is statistically significant.

	In both SECTOR and MOM10, all model--construction combinations gain positive Sharpe differences relative to equal weight. The strongest SECTOR results are ICA+NTS ERC at \(+0.051\) with \(p=0.023\) and ICA+CTS ERC at \(+0.042\) with \(p=0.059\). In MOM10, ICA+NTS and ICA+CTS ERC obtain differences of \(+0.028\) with \(p=0.052\) and \(+0.032\) with \(p=0.035\). ICA+CTS IRP also has a difference of \(+0.026\) with \(p=0.076\). The ICA-based ERC portfolios combine slightly higher CAGR with lower volatility relative to equal weight.

	Table~\ref{tab:normal_benchmark_sig_all_lb12_h1_rp} compares the tempered stable portfolios with the matched Gaussian EVaR benchmark. In MOM10 ERC, ICA+NTS and ICA+CTS show gross Sharpe differences of \(+0.014\) with \(p=0.041\) and \(+0.019\) with \(p=0.057\), respectively. After 25-basis-point transaction costs, the differences fall to \(+0.008\) with \(p=0.218\) and \(+0.009\) with \(p=0.370\). Direct MNTS is below the Gaussian benchmark by 0.007 in both gross and net Sharpe ratio, with \(p=0.073\) and \(p=0.063\). Among the remaining comparisons, no gross or net Sharpe difference has a \(p\)-value below 0.10.

\begingroup
\begin{table}[!t]
\centering
\caption{Risk-parity portfolios relative to the matched Gaussian EVaR-contribution benchmark. The table reports the Gaussian gross and net Sharpe ratios and the Sharpe differences of the tempered-stable specifications, gross and net of a 25 basis point transaction-cost assumption. $^{*}$, $^{**}$, and $^{***}$ denote significance at the 10\%, 5\%, and 1\% levels. }
\label{tab:normal_benchmark_sig_all_lb12_h1_rp}
\scriptsize
\setlength{\tabcolsep}{4.2pt}
\resizebox{\linewidth}{!}{%
\begin{tabular}{llrr|rrrr|rrrr|rrrr}
\toprule
 &  & \multicolumn{2}{c|}{Normal benchmark} & \multicolumn{4}{c|}{MNTS} & \multicolumn{4}{c|}{ICA$+$NTS} & \multicolumn{4}{c}{ICA$+$CTS} \\
\cmidrule(lr){3-4}\cmidrule(lr){5-8}\cmidrule(lr){9-12}\cmidrule(lr){13-16}
Market & Construction & Gross SR & Net SR & $\Delta$ gross & $p$ & $\Delta$ net & $p$ & $\Delta$ gross & $p$ & $\Delta$ net & $p$ & $\Delta$ gross & $p$ & $\Delta$ net & $p$ \\
\midrule
SECTOR & IRP & 0.558 & 0.556 & -0.001 & 0.714 & -0.002 & 0.630 & +0.001 & 0.938 & -0.008 & 0.481 & -0.000 & 0.997 & -0.015 & 0.307 \\
 & ERC & 0.560 & 0.557 & -0.001 & 0.803 & -0.002 & 0.745 & +0.028 & 0.110 & +0.019 & 0.273 & +0.019 & 0.249 & +0.005 & 0.746 \\
\addlinespace
MOM10 & IRP & 0.584 & 0.582 & -0.002 & 0.784 & -0.002 & 0.660 & +0.004 & 0.514 & -0.002 & 0.728 & +0.009 & 0.279 & -0.002 & 0.838 \\
 & ERC & 0.581 & 0.579 & -0.007$^{*}$ & 0.073 & -0.007$^{*}$ & 0.063 & +0.014$^{**}$ & 0.041 & +0.008 & 0.218 & +0.019$^{*}$ & 0.057 & +0.009 & 0.370 \\
\addlinespace
XASSET & IRP & 0.760 & 0.753 & -0.005 & 0.750 & -0.006 & 0.658 & -0.039 & 0.402 & -0.059 & 0.212 & -0.259 & 0.322 & -0.286 & 0.282 \\
 & ERC & 0.827 & 0.814 & -0.003 & 0.834 & -0.004 & 0.798 & -0.079 & 0.331 & -0.101 & 0.224 & -0.150 & 0.384 & -0.183 & 0.291 \\
\bottomrule
\end{tabular}%
}
\end{table}
\endgroup

\subsection{Matched EVaR--CVaR comparisons}
\label{sec_full_sample_matched}
	We next compare matched EVaR and CVaR portfolios across the three return-model specifications and the two risk parity rules. The matched design holds the return-model specification and risk parity rule fixed and changes only the tail-risk measure. Table~\ref{tab:matched_pairs_sig_all_lb12_h1_rp} reports the matched full-sample comparisons. Under direct MNTS, the EVaR--CVaR Sharpe differences are small for both IRP and ERC in all three universes. The ERC differences are \(-0.001\) in SECTOR, \(-0.003\) in MOM10, and \(+0.006\) in XASSET, with \(p\)-values of 0.642, 0.179, and 0.651. None is statistically significant.

\begingroup
\begin{table}[!t]
\centering
\caption{Matched comparison of matched EVaR- and CVaR-based risk-parity portfolios with significance of the Sharpe difference. $^{*}$, $^{**}$, and $^{***}$ denote significance at the 10\%, 5\%, and 1\% levels. }
\label{tab:matched_pairs_sig_all_lb12_h1_rp}
\scriptsize
\setlength{\tabcolsep}{3.4pt}
\resizebox{\linewidth}{!}{%
\begin{tabular}{lllrrrrrrrrrr}
\toprule
 &  & \multicolumn{4}{c}{EVaR-based} & \multicolumn{4}{c}{CVaR-based} & & \\
\cmidrule(lr){4-7}\cmidrule(lr){8-11}
Market & Construction & Approach & CumRet & SR & MDD & TO & CumRet & SR & MDD & TO & $\Delta$SR & $p$ \\
\midrule
SECTOR & IRP & MNTS & 729.31 & 0.557 & 50.29 & 1.66 & 733.33 & 0.558 & 50.15 & 1.50 & -0.001 & 0.299 \\
 &  & ICA$+$NTS & 745.51 & 0.559 & 50.11 & 6.60 & 738.63 & 0.560 & 49.94 & 1.59 & -0.001 & 0.936 \\
 &  & ICA$+$CTS & 739.85 & 0.558 & 49.35 & 9.91 & 739.68 & 0.560 & 49.94 & 1.61 & -0.002 & 0.871 \\
\addlinespace
 & ERC & MNTS & 725.76 & 0.559 & 50.29 & 1.97 & 730.80 & 0.560 & 50.12 & 1.87 & -0.001 & 0.642 \\
 &  & ICA$+$NTS & 853.79 & 0.588 & 48.56 & 6.93 & 735.09 & 0.560 & 49.85 & 2.93 & +0.028 & 0.110 \\
 &  & ICA$+$CTS & 821.40 & 0.579 & 48.39 & 9.49 & 746.11 & 0.562 & 50.04 & 2.52 & +0.017 & 0.323 \\
\addlinespace
MOM10 & IRP & MNTS & 1698.10 & 0.583 & 58.80 & 1.94 & 1665.13 & 0.581 & 58.65 & 1.47 & +0.002 & 0.653 \\
 &  & ICA$+$NTS & 1754.58 & 0.588 & 58.42 & 5.39 & 1685.53 & 0.583 & 58.40 & 1.55 & +0.005 & 0.387 \\
 &  & ICA$+$CTS & 1815.86 & 0.593 & 58.85 & 8.52 & 1686.46 & 0.583 & 58.39 & 1.55 & +0.010 & 0.223 \\
\addlinespace
 & ERC & MNTS & 1619.55 & 0.574 & 59.02 & 1.73 & 1642.06 & 0.577 & 58.85 & 1.64 & -0.003 & 0.179 \\
 &  & ICA$+$NTS & 1840.93 & 0.595 & 58.53 & 5.27 & 1658.21 & 0.580 & 58.53 & 1.96 & +0.015$^{**}$ & 0.028 \\
 &  & ICA$+$CTS & 1897.58 & 0.600 & 58.76 & 8.09 & 1675.14 & 0.582 & 58.35 & 1.91 & +0.018$^{*}$ & 0.065 \\
\addlinespace
XASSET & IRP & MNTS & 185.06 & 0.755 & 24.44 & 1.98 & 187.89 & 0.760 & 24.13 & 1.68 & -0.005 & 0.421 \\
 &  & ICA$+$NTS & 189.92 & 0.720 & 25.46 & 6.78 & 187.87 & 0.766 & 23.33 & 1.69 & -0.045 & 0.321 \\
 &  & ICA$+$CTS & 146.52 & 0.501 & 47.11 & 11.19 & 188.39 & 0.767 & 23.15 & 1.78 & -0.266 & 0.318 \\
\addlinespace
 & ERC & MNTS & 184.16 & 0.824 & 20.57 & 3.04 & 187.15 & 0.818 & 20.52 & 3.28 & +0.006 & 0.651 \\
 &  & ICA$+$NTS & 187.95 & 0.748 & 21.97 & 8.52 & 187.64 & 0.809 & 21.30 & 3.98 & -0.061 & 0.440 \\
 &  & ICA$+$CTS & 184.02 & 0.678 & 34.56 & 12.76 & 190.46 & 0.804 & 20.93 & 3.02 & -0.126 & 0.474 \\
\bottomrule
\end{tabular}%
}
\end{table}
\endgroup

	The larger matched differences occur under ICA-based ERC. In MOM10, ICA+NTS EVaR-ERC has a Sharpe ratio of 0.595, compared with 0.580 for matched CVaR-ERC. The difference is \(+0.015\) with \(p=0.028\). ICA+CTS gives a difference of \(+0.018\) with \(p=0.065\). These are the only matched EVaR--CVaR Sharpe differences with \(p<0.10\) in Table~\ref{tab:matched_pairs_sig_all_lb12_h1_rp}.

	The ICA-based ERC differences change sign across universes. In SECTOR, the NTS and CTS differences are \(+0.028\) and \(+0.017\), compared with \(-0.061\) and \(-0.126\) in XASSET. None of these four differences has a \(p\)-value below 0.10. XASSET ICA+CTS also has a larger maximum drawdown under EVaR than under matched CVaR, 34.56\% versus 20.93\%.

	IRP exhibits a different pattern. None of the matched IRP Sharpe differences has a \(p\)-value below 0.10. The largest observed IRP difference is the XASSET ICA+CTS value of \(-0.266\), but its \(p\)-value is 0.318. The matched EVaR--CVaR differences with \(p<0.10\) appear only in the ICA-based ERC comparisons.

\subsection{Transaction costs and turnover}
\label{sec_cost_results}
	We next examine transaction costs and turnover for the matched EVaR--CVaR portfolios. Table~\ref{tab:cost_premium_sig_all_lb12_h1_rp} reports the matched EVaR--CVaR Sharpe differences at transaction costs of 0, 5, 10, and 25 basis points. The positive MOM10 ICA ERC differences become smaller as costs increase. ICA+NTS falls from \(+0.015\) with \(p=0.028\) to \(+0.010\) with \(p=0.140\), and ICA+CTS falls from \(+0.018\) with \(p=0.065\) to \(+0.009\) with \(p=0.373\). The positive point estimates remain at 25 basis points, but no EVaR--CVaR difference at that cost level has a \(p\)-value below 0.10.

\begingroup
\begin{table}[!t]
\centering
\caption{Transaction-cost path of the EVaR--CVaR Sharpe premium for the risk-parity constructions. Each entry is the Sharpe ratio of the EVaR-based portfolio minus that of the matched CVaR-based portfolio. The zero-cost column is the gross difference. $^{*}$, $^{**}$, and $^{***}$ denote significance at the 10\%, 5\%, and 1\% levels. }
\label{tab:cost_premium_sig_all_lb12_h1_rp}
\scriptsize
\setlength{\tabcolsep}{4pt}
\begin{tabular}{lllrrrrrrrr}
\toprule
Market & Construction & Approach & 0 bps & $p$ & 5 bps & $p$ & 10 bps & $p$ & 25 bps & $p$ \\
\midrule
SECTOR & IRP & MNTS & -0.001 & 0.299 & -0.002 & 0.282 & -0.002 & 0.264 & -0.002 & 0.222 \\
 &  & ICA$+$NTS & -0.001 & 0.936 & -0.003 & 0.812 & -0.004 & 0.689 & -0.010 & 0.381 \\
 &  & ICA$+$CTS & -0.002 & 0.871 & -0.005 & 0.715 & -0.008 & 0.570 & -0.017 & 0.231 \\
\addlinespace
 & ERC & MNTS & -0.001 & 0.642 & -0.001 & 0.633 & -0.001 & 0.624 & -0.002 & 0.593 \\
 &  & ICA$+$NTS & +0.028 & 0.110 & +0.027 & 0.127 & +0.026 & 0.148 & +0.021 & 0.227 \\
 &  & ICA$+$CTS & +0.017 & 0.323 & +0.015 & 0.394 & +0.012 & 0.476 & +0.005 & 0.771 \\
\addlinespace
MOM10 & IRP & MNTS & +0.002 & 0.653 & +0.002 & 0.671 & +0.002 & 0.692 & +0.001 & 0.753 \\
 &  & ICA$+$NTS & +0.005 & 0.387 & +0.004 & 0.516 & +0.002 & 0.661 & -0.001 & 0.857 \\
 &  & ICA$+$CTS & +0.010 & 0.223 & +0.008 & 0.334 & +0.006 & 0.477 & -0.001 & 0.944 \\
\addlinespace
 & ERC & MNTS & -0.003 & 0.179 & -0.003 & 0.174 & -0.003 & 0.168 & -0.003 & 0.157 \\
 &  & ICA$+$NTS & +0.015$^{**}$ & 0.028 & +0.014$^{**}$ & 0.040 & +0.013$^{*}$ & 0.055 & +0.010 & 0.140 \\
 &  & ICA$+$CTS & +0.018$^{*}$ & 0.065 & +0.016 & 0.101 & +0.014 & 0.147 & +0.009 & 0.373 \\
\addlinespace
XASSET & IRP & MNTS & -0.005 & 0.421 & -0.005 & 0.398 & -0.005 & 0.379 & -0.006 & 0.324 \\
 &  & ICA$+$NTS & -0.045 & 0.321 & -0.049 & 0.284 & -0.053 & 0.248 & -0.065 & 0.163 \\
 &  & ICA$+$CTS & -0.266 & 0.318 & -0.271 & 0.311 & -0.276 & 0.302 & -0.292 & 0.281 \\
\addlinespace
 & ERC & MNTS & +0.006 & 0.651 & +0.006 & 0.641 & +0.006 & 0.629 & +0.006 & 0.598 \\
 &  & ICA$+$NTS & -0.061 & 0.440 & -0.065 & 0.417 & -0.068 & 0.391 & -0.078 & 0.324 \\
 &  & ICA$+$CTS & -0.126 & 0.474 & -0.133 & 0.448 & -0.140 & 0.423 & -0.160 & 0.362 \\
\bottomrule
\end{tabular}
\end{table}
\endgroup

	Turnover differs more clearly under the ICA-based ERC portfolios. Under direct MNTS, EVaR and CVaR turnover are similar. For ICA+NTS ERC, EVaR and CVaR turnover are 6.93\% and 2.93\% in SECTOR, 5.27\% and 1.96\% in MOM10, and 8.52\% and 3.98\% in XASSET. ICA+CTS shows an even larger turnover gap. ICA-based EVaR-ERC has higher turnover than matched CVaR-ERC in all three universes.

\subsection{EVaR deviation and the fitted-location term}
\label{sec_centered_results}
	We finally assess whether removing the fitted location term materially changes EVaR-ERC portfolio performance. Table~\ref{tab:ercdev_matched_sig_all_lb12_h1_rp} compares raw EVaR-ERC and EVaR-deviation ERC portfolios. For the EVaR rows, none of the raw EVaR--EVaR-deviation Sharpe differences has a \(p\)-value below 0.10. The \(p\)-values range from 0.151 to 0.993. Direct MNTS changes little in all three universes. ICA+NTS also changes little in full-sample Sharpe terms.

\begingroup
\begin{table}[!t]
\centering
\caption{Matched comparison of the EVaR-deviation ERC and raw EVaR-ERC risk-parity portfolios with significance of the Sharpe difference. $^{*}$, $^{**}$, and $^{***}$ denote significance at the 10\%, 5\%, and 1\% levels. }
\label{tab:ercdev_matched_sig_all_lb12_h1_rp}
\scriptsize
\setlength{\tabcolsep}{3.4pt}
\resizebox{\linewidth}{!}{%
\begin{tabular}{lllrrrrrrrrrr}
\toprule
 &  & \multicolumn{4}{c}{EVaR-deviation ERC} & \multicolumn{4}{c}{raw EVaR-ERC} & & \\
\cmidrule(lr){4-7}\cmidrule(lr){8-11}
Market & Risk measure & Approach & CumRet & SR & MDD & TO & CumRet & SR & MDD & TO & $\Delta$SR & $p$ \\
\midrule
SECTOR & EVaR & MNTS & 724.94 & 0.558 & 50.32 & 2.01 & 725.76 & 0.559 & 50.29 & 1.97 & -0.001 & 0.512 \\
 &  & ICA$+$NTS & 870.71 & 0.592 & 48.58 & 6.78 & 853.79 & 0.588 & 48.56 & 6.93 & +0.004 & 0.265 \\
 &  & ICA$+$CTS & 819.68 & 0.579 & 48.41 & 9.42 & 821.40 & 0.579 & 48.39 & 9.49 & -0.000 & 0.993 \\
\addlinespace
MOM10 & EVaR & MNTS & 1685.96 & 0.580 & 59.09 & 2.21 & 1619.55 & 0.574 & 59.02 & 1.73 & +0.006 & 0.434 \\
 &  & ICA$+$NTS & 1837.88 & 0.595 & 58.59 & 5.35 & 1840.93 & 0.595 & 58.53 & 5.27 & -0.000 & 0.537 \\
 &  & ICA$+$CTS & 1869.45 & 0.597 & 58.86 & 8.05 & 1897.58 & 0.600 & 58.76 & 8.09 & -0.002 & 0.151 \\
\addlinespace
XASSET & EVaR & MNTS & 183.78 & 0.821 & 20.74 & 3.09 & 184.16 & 0.824 & 20.57 & 3.04 & -0.003 & 0.637 \\
 &  & ICA$+$NTS & 202.96 & 0.750 & 21.07 & 8.47 & 187.95 & 0.748 & 21.97 & 8.52 & +0.003 & 0.921 \\
 &  & ICA$+$CTS & 176.25 & 0.641 & 35.16 & 11.21 & 184.02 & 0.678 & 34.56 & 12.76 & -0.037 & 0.377 \\
\bottomrule
\end{tabular}%
}
\end{table}
\endgroup

	The largest observed centering change is ICA+CTS in XASSET, where the Sharpe ratio falls from 0.678 to 0.641. The reported difference is \(-0.037\) with \(p=0.377\). In MOM10, ICA+NTS is unchanged at the reported precision and ICA+CTS changes by only \(-0.002\). In SECTOR, the ICA+NTS Sharpe ratio increases by 0.004 with \(p=0.265\). These comparisons provide no evidence that removing the fitted location term materially changes full-sample EVaR-ERC Sharpe performance.

\section{Conclusion}
\label{sec_conclusion}
	We develop EVaR-based risk parity under MNTS and ICA-based tempered stable return models. We derive asset-level Euler contributions for EVaR and the corresponding EVaR-deviation risk contributions. Under ICA, EVaR uses exponentially tilted component means, while CVaR uses component means conditional on the portfolio lower-tail event. Centering separates the fitted location term from EVaR risk contributions, and Gaussian EVaR-deviation IRP and ERC give the same portfolio weights as conventional volatility IRP and ERC.

	Sharpe ratio differences relative to equal weight are positive for all ERC portfolios and for all but one IRP portfolio, with statistically significant improvements in several comparisons. Selected ICA-based ERC portfolios also show positive gross Sharpe differences relative to the Gaussian EVaR benchmark. Matched EVaR and CVaR portfolios remain close under direct MNTS. The matched differences are larger under ICA-based ERC, but their signs vary across universes. ICA-based EVaR-ERC has higher turnover than matched CVaR-ERC, and transaction costs weaken the positive matched Sharpe differences in MOM10. Centering has little effect on full-sample Sharpe ratios. These findings do not indicate a uniform performance advantage of EVaR over CVaR and should be interpreted within the investment universes, sample periods, and backtest design considered here.

\clearpage
\bibliographystyle{plainnat}
\bibliography{TSEVaRRP}

\end{document}